\documentclass[letterpaper, 10 pt, conference]{IEEEconf}
\IEEEoverridecommandlockouts  
\overrideIEEEmargins
\usepackage{graphicx} 
\usepackage[T1]{fontenc}
\usepackage{lmodern}

\usepackage{amsmath}
\usepackage{tikz}
\usepackage{amsfonts}
\usepackage{nicefrac}
\usepackage{dsfont}

\usepackage[shortlabels]{enumitem}
\usepackage{algorithm}
\usepackage{algpseudocode}
\newtheorem{theorem}{Theorem}
\newtheorem{definition}{Definition}
\newcommand{\expectation}{\mathds{E}}
\newcommand{\argmin}{\arg\min}

\newcommand{\R}{\mathds{R}}
\newcommand{\probabilitymeasure}{\mathds{P}}
\newcommand{\indicator}{\mathds{1}}

\newcommand{\timeindex}{k}
\newcommand{\alttimeindex}{m}

\newcommand{\statespace}{\mathcal{X}}
\newcommand{\statedim}{X}
\newcommand{\statevar}{x}
\newcommand{\stateidx}{i}

\newcommand{\textobservationspace}{\mathcal{Y}^\prime}
\newcommand{\observationspace}{\mathcal{Y}}
\newcommand{\llmobservation}{y^\prime}
\newcommand{\observationmatrix}{B}
\newcommand{\observation}{y}

\newcommand{\prior}{\pi}
\newcommand{\priorupdate}{\mathcal{T}}
\newcommand{\actionprob}{R}

\newcommand{\action}{u}
\newcommand{\decision}{a}
\newcommand{\actionalt}{\Tilde{u}}

\newcommand{\actionspace}{\mathcal{U}}
\newcommand{\cost}{c}

\newcommand{\bayesianagenttext}{language-driven decision-making agent}
\newcommand{\bayesianagentstext}{language-driven decision-making agents}
\newcommand{\bayesianagent}{LDDA}
\newcommand{\bayesianagents}{LDDAs}

\newcommand{\markovkernel}{P}
\newcommand{\herdtime}{K}
\newcommand{\policy}{\mu}
\newcommand{\avgcost}{J}
\newcommand{\discountfactor}{\rho}
\newcommand{\stoppingtime}{\tau}
\newcommand{\delaycost}{d}
\newcommand{\errorcost}{\delta}
\newcommand{\indicatorstate}{e}
\newcommand{\filtration}{\mathcal{F}}
\newcommand{\actionfiltration}{\mathcal{G}}

\newcommand{\probspace}{\mathcal{P}}
\newcommand{\decisionspace}{\mathcal{D}}
\newcommand{\region}{\mathcal{R}}
\newcommand{\timehorizon}{T}

\newcommand{\threshold}{\gamma}
\newcommand{\lines}{\mathcal{L}}
\newcommand{\linesymb}{L}
\newcommand{\bigcost}{C}
\newcommand{\priorratio}{\Lambda}
\newcommand{\stateidxalt}{j}
\newcommand{\actionratio}{\Gamma}
\newcommand{\constant}{\kappa}

\title{\bf Identifying Hate Speech Peddlers in Online Platforms. A Bayesian Social Learning Approach for Large Language Model Driven Decision-Makers }
\author{Adit Jain and Vikram Krishnamurthy, \textit{IEEE Fellow}
\thanks{A. Jain and V. Krishnamurthy are with the Department of Electrical and Computer Engineering, 
        Cornell University, Ithaca, NY, 14853.
        Corresponding email: {\tt\small aj457@cornell.edu }}%
}

\begin{document}
\maketitle
\begin{abstract}
    This paper studies the problem of autonomous agents performing Bayesian social learning for sequential detection when the observations of the state belong to a high-dimensional space and are expensive to analyze. Specifically, when the observations are textual, the Bayesian agent can use a large language model (LLM) as a map to get a low-dimensional private observation. The agent performs Bayesian learning and takes an action that minimizes the expected cost and is visible to subsequent agents. We prove that a sequence of such Bayesian agents herd in finite time to the public belief and take the same action disregarding the private observations. We propose a stopping time formulation for quickest time herding in social learning and optimally balance privacy and herding. Structural results are shown on the threshold nature of the optimal policy to the stopping time problem. We illustrate the application of our framework when autonomous Bayesian detectors aim to sequentially identify if a user is a hate speech peddler on an online platform by parsing text observations using an LLM.
    We numerically validate our results on real-world hate speech datasets. We show that autonomous Bayesian agents designed to flag hate speech peddlers in online platforms herd and misclassify the users when the public prior is strong. We also numerically show the effect of a threshold policy in delaying herding. 
\end{abstract}
\section{Introduction}
In a social learning setting, Bayesian decision makers (or agents) sequentially detect and act based on observations of the state of nature and actions of the previous agents. The observations are often from a high-dimensional space and are expensive to analyze. If the observations are textual, the Bayesian decision-makers make use of tools like large language models as sensors to get a low-dimensional representation of the text observation~\footnote{Our results extend to multi-modal observations including a combination of images, text, and audios, and the LLM sensor can be replaced by an appropriate transformer network. }. The observations are private to the respective agents (due to privacy and communication constraints), but the actions are public. Based on the action of the current agent, the successive agents update their prior, which can be shown to cascade and remain fixed after some time. When a cascade happens, the agents herd and take the same action irrespective of their observation. 

\textit{How can herding in \bayesianagentstext\ (\bayesianagents) 
be controlled so that sequential detection is achieved with minimum privacy loss to the agents?}

We propose a stochastic control based approach for quickest time herding, which ensures that the \bayesianagents\ act benevolently based on the public prior. This framework is of interest when the \bayesianagents\ are deployed in a sensitive application and interact with humans, e.g., on online platforms where the task of the \bayesianagents\ is to sequentially flag a user as a hate speech peddler based on the text comments.  The main contributions of this paper are,
\subsection{Main Results}
\begin{enumerate}
    \item We formalize the problem of Bayesian agents sequentially performing social learning and taking public actions to minimize their expected cost.  The Bayesian agents have large language models (LLM) sensors to parse private text observations. The social learning protocol followed by a sequence of \bayesianagentstext\ (\bayesianagents) is derived. 
    \item For the described social learning protocol, Theorem~\ref{th:herding} shows that the public prior of the \bayesianagents\ forms an information cascade which causes the actions of the \bayesianagents\ to herd with probability 1, as a result of which the later \bayesianagents\ ignore their respective private observations.
    \item This paper formulates a stopping time problem to delay herding in the social learning performed by the \bayesianagents. We propose a socialistic cumulative cost and show in Theorem~\ref{th:threshold} that under structural assumptions, the optimal policy for this cost has a threshold switching curve w.r.t. the public belief. 
    \item The theoretical result on herding is numerically demonstrated on a classification task for hate speech peddlers on online platforms. The \bayesianagents\ equipped classify hate speech peddlers using text observations from a real-life hate speech dataset. We also study how a threshold policy can delay herding of \bayesianagents\ in the quickest time herding setup. 
\end{enumerate}
\subsection{Motivation}
\textit{Social Learning in Bayesian agents equipped with LLMs as sensors:}
The effectiveness of LLM, which are transformer neural networks with billions of parameters and are trained on trillions of tokens of textual data to parse long texts for summarizing, compiling key facts, and generating new text, has made their deployment widespread~\cite{min_recent_2023}. Many applications have been proposed in healthcare, online platform moderation, and finance, where these LLMs are used to parse the textual observations and make decisions based on their outputs~\cite{li_pre-trained_2022}. The output of the LLMs, either in a specified format or as embeddings, is often used as inputs to other Bayesian entities, including classifiers~\cite{minaee_deep_2021}. 

A single Bayesian agent who uses the LLM to parse text observations, update its Bayesian belief, and take action sequentially can be modeled as a series of Bayesian agents. Multiple Bayesian agents, each receiving private observations, is motivated by privacy, improved detection, and finite context length. If the same private observation (even the low-dimensional representation) is used, the LLM can be fine-trained on this data, which might contain sensitive information~\cite{duan_flocks_nodate,mohtashami_social_2024}. Also, different LLMs can be given a diverse set of contexts, which enables reducing the bias involved with their decisions~\cite{kaddour_challenges_2023}. 

Recent research has studied modeling LLMs as autonomous agents and making LLM part of bigger autonomous agents, including robots, self-driving cars, and programming co-pilots~\cite{cui_drive_2024,suri_software_2023}. Recently~\cite{mohtashami_social_2024} looked at social learning in LLMs using a teacher-student framework, but this work was in a static setting where the LLMs don't have a belief that they adaptively update. Sequential social learning in Bayesian agents has been studied extensively~\cite{chamley_rational_2004}, and our work formalizes the problem of Bayesian social learning in~\bayesianagents.

\begin{figure}[t!]
    \centering
    \vspace{2mm}    
    \resizebox{\columnwidth}{!}{
\tikzset{every picture/.style={line width=0.75pt}} 

\begin{tikzpicture}[x=0.75pt,y=0.75pt,yscale=-1,xscale=1]

\draw   (127,95.5) -- (254,95.5) -- (254,153.5) -- (127,153.5) -- cycle ;
\draw   (298,76.5) -- (425,76.5) -- (425,193.5) -- (298,193.5) -- cycle ;
\draw    (253,123.5) -- (297,123.5) ;
\draw [shift={(299,123.5)}, rotate = 180] [color={rgb, 255:red, 0; green, 0; blue, 0 }  ][line width=0.75]    (10.93,-3.29) .. controls (6.95,-1.4) and (3.31,-0.3) .. (0,0) .. controls (3.31,0.3) and (6.95,1.4) .. (10.93,3.29)   ;
\draw   (107,55.5) -- (454,55.5) -- (454,219.5) -- (107,219.5) -- cycle ;
\draw    (425,123.5) -- (477,123.5) ;
\draw [shift={(479,123.5)}, rotate = 180] [color={rgb, 255:red, 0; green, 0; blue, 0 }  ][line width=0.75]    (10.93,-3.29) .. controls (6.95,-1.4) and (3.31,-0.3) .. (0,0) .. controls (3.31,0.3) and (6.95,1.4) .. (10.93,3.29)   ;
\draw    (85,125.5) -- (123,125.5) ;
\draw [shift={(125,125.5)}, rotate = 180] [color={rgb, 255:red, 0; green, 0; blue, 0 }  ][line width=0.75]    (10.93,-3.29) .. controls (6.95,-1.4) and (3.31,-0.3) .. (0,0) .. controls (3.31,0.3) and (6.95,1.4) .. (10.93,3.29)   ;
\draw   (306,89.5) -- (418,89.5) -- (418,125.5) -- (306,125.5) -- cycle ;
\draw   (305,142.5) -- (417,142.5) -- (417,178.5) -- (305,178.5) -- cycle ;
\draw    (356,44.5) -- (356,73.5) ;
\draw [shift={(356,75.5)}, rotate = 270] [color={rgb, 255:red, 0; green, 0; blue, 0 }  ][line width=0.75]    (10.93,-3.29) .. controls (6.95,-1.4) and (3.31,-0.3) .. (0,0) .. controls (3.31,0.3) and (6.95,1.4) .. (10.93,3.29)   ;

\draw (267,98.4) node [anchor=north west][inner sep=0.75pt]    {$\observation_\timeindex$};
\draw (491,112.4) node [anchor=north west][inner sep=0.75pt]    {$\action_\timeindex$};
\draw (21,117.4) node [anchor=north west][inner sep=0.75pt]    {$\statevar_\timeindex \sim \llmobservation_\timeindex$};

\draw (142,110) node [anchor=north west][inner sep=0.75pt]   [align=left] {\begin{minipage}[lt]{77.03pt}\setlength\topsep{0pt}
\textbf{L}arge \textbf{L}anguage
\begin{center}
\textbf{M}odel
\end{center}

\end{minipage}};
\draw (125,226) node [anchor=north west][inner sep=0.75pt]   [align=left] {Large \textbf{L}anguage Model \textbf{D}riven \textbf{D}ecision Making \textbf{A}gent};
\draw (312,100) node [anchor=north west][inner sep=0.75pt][align=left] {\begin{minipage}[lt]{73.37pt}\setlength\topsep{0pt}
\begin{center}
Prior
\end{center}

\end{minipage}};
\draw (312,150) node [anchor=north west][inner sep=0.75pt][align=left] {\begin{minipage}[lt]{73.37pt}\setlength\topsep{0pt}
\begin{center}
Likelihood (NN)
\end{center}
\end{minipage}};
\draw (318,21.4) node [anchor=north west][inner sep=0.75pt]    {$(\action_{1},\dots,\action_{\timeindex-1})$};

\end{tikzpicture}
}
    \vspace{-10mm}
    \caption{The \bayesianagenttext\ (\bayesianagent) modeled in the text has two components: a) a large language model (LLM) which acts as a noisy sensor to provide a low-dimensional observation $\observation_\timeindex$ for the true state $\statevar_\timeindex$ by parsing the high-dimensional text observation $\llmobservation_\timeindex$ and b) a neural Bayesian engine which uses Bayes rules to update the belief about the state using \eqref{eq:bayesrule}. A neural network parameterizes the likelihood. The \bayesianagent\ outputs the action minimizing the expected cost \eqref{eq:action} and uses the actions of the previous \bayesianagents\ to update the prior using \eqref{eq:priorupdate}.  }
    \label{fig:singleagent}
    \vspace{-5mm}
\end{figure}

\textit{Hate Speech Peddler Identification On Social Networks: }
Identifying hate speech\footnote{There is an active debate on the definition of hate speech and the tradeoff between free speech and hate speech~\cite{howard_free_2019}. Hence, to circumvent this discussion, we use hate speech as an exemplary case study of our methods, and the definition of hate speech is implicit from the source of the dataset in the experiments. Our techniques can be applied to different definitions of hate speech and other applications as is described later.} and toxic content has been studied in various contexts, e.g., in reducing unintended bias, detecting covert hate speech, and mitigating hate speech on online platforms~\cite{kennedy_contextualizing_2020}. \cite{sachdeva_measuring_2022} have looked at how to quantify the intensity of hate speech and created labeled datasets. In~\cite{jain_controlling_2024}, the authors looked at controlling federated learning for hate speech classification. In this paper, we look at the problem of Bayesian agents identifying hate speech peddlers by sequentially parsing comments from users using an LLM.

\textit{Additional application: } In financial networks, LLMs can be used as sensors to parse textual information, including news articles and financial reports, and decisions are then taken based on the low-dimensional observations from the LLMs. Such a sequence of agents can herd in their decisions, leading to a financial bubble~\cite{chamley_rational_2004}.

\subsection{Organization}
Section~\ref{sec:formulation} formalizes the social learning framework and derives the social learning protocol for \bayesianagents\ sequentially parsing text observations to take actions minimizing the expected cost. Section~\ref{sec:herding} puts forth our main result, proving that herding occurs in the described social learning protocol. Section~\ref{sec:stochasticcontrol} proposes an online stochastic control approach for mitigating herding in \bayesianagents. Section~\ref{sec:application} describes an application of the proposed framework in identifying hate speech peddlers on social platforms and demonstrates numerical results on real-life hate speech datasets for the proposed approach. Section~\ref{sec:conclusion} concludes the paper, and the proofs for the main result are in the Appendix.

\begin{figure}[t!]
    \centering
    \vspace{3mm}
    \tikzset{every picture/.style={line width=0.75pt}} 

\begin{tikzpicture}[x=0.75pt,y=0.75pt,yscale=-1,xscale=1]

\draw    (190,40) -- (190,58) ;
\draw [shift={(190,60)}, rotate = 270] [color={rgb, 255:red, 0; green, 0; blue, 0 }  ][line width=0.75]    (10.93,-3.29) .. controls (6.95,-1.4) and (3.31,-0.3) .. (0,0) .. controls (3.31,0.3) and (6.95,1.4) .. (10.93,3.29)   ;
\draw    (190,84) -- (190,98) ;
\draw [shift={(190,100)}, rotate = 270] [color={rgb, 255:red, 0; green, 0; blue, 0 }  ][line width=0.75]    (10.93,-3.29) .. controls (6.95,-1.4) and (3.31,-0.3) .. (0,0) .. controls (3.31,0.3) and (6.95,1.4) .. (10.93,3.29)   ;
\draw   (156,102) -- (226,102) -- (226,142) -- (156,142) -- cycle ;
\draw    (190,142) -- (190,160) ;
\draw [shift={(190,162)}, rotate = 270] [color={rgb, 255:red, 0; green, 0; blue, 0 }  ][line width=0.75]    (10.93,-3.29) .. controls (6.95,-1.4) and (3.31,-0.3) .. (0,0) .. controls (3.31,0.3) and (6.95,1.4) .. (10.93,3.29)   ;
\draw   (160,10) -- (420,10) -- (420,40) -- (160,40) -- cycle ;
\draw    (274,40) -- (274,58) ;
\draw [shift={(274,60)}, rotate = 270] [color={rgb, 255:red, 0; green, 0; blue, 0 }  ][line width=0.75]    (10.93,-3.29) .. controls (6.95,-1.4) and (3.31,-0.3) .. (0,0) .. controls (3.31,0.3) and (6.95,1.4) .. (10.93,3.29)   ;
\draw    (274,84) -- (274,98) ;
\draw [shift={(274,100)}, rotate = 270] [color={rgb, 255:red, 0; green, 0; blue, 0 }  ][line width=0.75]    (10.93,-3.29) .. controls (6.95,-1.4) and (3.31,-0.3) .. (0,0) .. controls (3.31,0.3) and (6.95,1.4) .. (10.93,3.29)   ;
\draw   (240,102) -- (310,102) -- (310,142) -- (240,142) -- cycle ;
\draw    (274,142) -- (274,160) ;
\draw [shift={(274,162)}, rotate = 270] [color={rgb, 255:red, 0; green, 0; blue, 0 }  ][line width=0.75]    (10.93,-3.29) .. controls (6.95,-1.4) and (3.31,-0.3) .. (0,0) .. controls (3.31,0.3) and (6.95,1.4) .. (10.93,3.29)   ;
\draw    (394,40) -- (394,58) ;
\draw [shift={(394,60)}, rotate = 270] [color={rgb, 255:red, 0; green, 0; blue, 0 }  ][line width=0.75]    (10.93,-3.29) .. controls (6.95,-1.4) and (3.31,-0.3) .. (0,0) .. controls (3.31,0.3) and (6.95,1.4) .. (10.93,3.29)   ;
\draw    (394,84) -- (394,98) ;
\draw [shift={(394,100)}, rotate = 270] [color={rgb, 255:red, 0; green, 0; blue, 0 }  ][line width=0.75]    (10.93,-3.29) .. controls (6.95,-1.4) and (3.31,-0.3) .. (0,0) .. controls (3.31,0.3) and (6.95,1.4) .. (10.93,3.29)   ;
\draw   (360,102) -- (430,102) -- (430,142) -- (360,142) -- cycle ;
\draw    (394,142) -- (394,160) ;
\draw [shift={(394,162)}, rotate = 270] [color={rgb, 255:red, 0; green, 0; blue, 0 }  ][line width=0.75]    (10.93,-3.29) .. controls (6.95,-1.4) and (3.31,-0.3) .. (0,0) .. controls (3.31,0.3) and (6.95,1.4) .. (10.93,3.29)   ;
\draw    (201,170) -- (229.34,151.11) ;
\draw [shift={(231,150)}, rotate = 146.31] [color={rgb, 255:red, 0; green, 0; blue, 0 }  ][line width=0.75]    (10.93,-3.29) .. controls (6.95,-1.4) and (3.31,-0.3) .. (0,0) .. controls (3.31,0.3) and (6.95,1.4) .. (10.93,3.29)   ;
\draw    (282,170) -- (310.34,151.11) ;
\draw [shift={(312,150)}, rotate = 146.31] [color={rgb, 255:red, 0; green, 0; blue, 0 }  ][line width=0.75]    (10.93,-3.29) .. controls (6.95,-1.4) and (3.31,-0.3) .. (0,0) .. controls (3.31,0.3) and (6.95,1.4) .. (10.93,3.29)   ;

\draw (322,20) node [anchor=north west][inner sep=0.75pt]    {$\statevar$};
\draw (183,164.4) node [anchor=north west][inner sep=0.75pt]    {$a_{1}$};
\draw (182,61.4) node [anchor=north west][inner sep=0.75pt]    {$y_{1}^{\prime}$};
\draw (166,114) node [anchor=north west][inner sep=0.75pt]   [align=left] {LDDA 1};
\draw (213,18) node [anchor=north west][inner sep=0.75pt]   [align=left] {State of Nature};
\draw (135.5,75.5) node   [align=left] {Text \\Obs.};
\draw (266,164.4) node [anchor=north west][inner sep=0.75pt]    {$a_{2}$};
\draw (266,61.4) node [anchor=north west][inner sep=0.75pt]    {$y_{2}^{\prime}$};
\draw (250,114) node [anchor=north west][inner sep=0.75pt]   [align=left] {LDDA 2};
\draw (387,164.4) node [anchor=north west][inner sep=0.75pt]    {$a_{k}$};
\draw (386,61.4) node [anchor=north west][inner sep=0.75pt]    {$y_{k}^{\prime}$};
\draw (370,114) node [anchor=north west][inner sep=0.75pt]   [align=left] {LDDA k};
\draw (324,114.4) node [anchor=north west][inner sep=0.75pt]    {$\dotsc $};
\draw (142.5,168) node   [align=left] {\begin{minipage}[lt]{31.07pt}\setlength\topsep{0pt}
\begin{center}
Action
\end{center}

\end{minipage}};

\end{tikzpicture}
    \vspace{-10mm}
    \caption{Schematic representation of social learning with \bayesianagentstext\ (\bayesianagent) with LLM sensors for compressing the observations, ($\llmobservation_\alttimeindex$) and estimating the underlying states, ($\statevar_\alttimeindex$). The \bayesianagent\ updates their prior over the state space using past actions ($\action_\alttimeindex$) and likelihood (parameterized by a neural network) for the compressed observations.}
    \label{fig:multipleagent}
    \vspace{-7mm}
\end{figure}
\section{Social Learning for \bayesianagentstext\ (\bayesianagents) }\label{sec:formulation}
This section formalizes the Bayesian social learning setup involving \bayesianagents. We discuss the system model and assumptions on the state, observation, and action space. In Section~\ref{sec:application}, we map out each component of the framework and motivate with respect to the application of hate speech peddler detection on online platforms. The framework is illustrated for single and multiple agents in Fig.~\ref{fig:singleagent} and Fig.~\ref{fig:multipleagent}, respectively.

The discrete state space is denoted by $\statespace$ and has cardinality $\statedim$. $\probspace(\statedim)$ denotes the probability simplex over the belief space. In the standard social learning literature, the state of nature evolves as a Markov chain with transition probability kernel $\markovkernel$. However, in the following sections, we concern ourselves with estimating the state of nature chosen at the start, $\statevar \in \statespace$, i.e., $\markovkernel=I$. A noisy observation, $\llmobservation_\timeindex \in \textobservationspace$  of state $\statevar$ is observed with probability $\probabilitymeasure(\llmobservation_\timeindex|\statevar)$, where $\textobservationspace$ is a discrete high-dimensional space.

A sequence of \bayesianagentstext\ (\bayesianagents) aim to estimate the state $\statevar$ given the text observations ($\llmobservation_\timeindex$) and use a large language model (LLM) as a sensor probe to achieve this. The LLM probe is a noisy map from the high-dimensional space $\textobservationspace$ to a low-dimensional discrete feature space $\observationspace$. For a text observation $\llmobservation \in \textobservationspace$, the LLM gives a low-dimensional observation $\observation\in\observationspace$ with probability $\probabilitymeasure(\observation_\timeindex|\llmobservation_\timeindex)$. Hence for state $\statevar$, the \bayesianagent\ recieves a noisy observation $\observation_\timeindex \in \observationspace$ with a probability $\probabilitymeasure(\observation_\timeindex|\statevar) = \sum_{\llmobservation}\probabilitymeasure(\observation_\timeindex|\llmobservation_\timeindex)\probabilitymeasure(\llmobservation_\timeindex|\statevar)$. $\observationmatrix \in \R^{\statedim\times|\observationspace|}$ is the observation matrix.

These noisy observations are fed into a \bayesianagent\ which acts as a detector of the state and computes the posterior,
    \begin{align}\label{eq:bayesrule}
\probabilitymeasure(\statevar|\observation_\timeindex) = \frac{\probabilitymeasure(\observation_\timeindex|\statevar)\prior_\timeindex(\statevar)}{\sum_{\statevar^\prime \in \statespace} \probabilitymeasure(\observation_\timeindex|\statevar^\prime)\prior_\timeindex(\statevar^\prime)},
    \end{align}
    where $\prior_\timeindex$ is the prior at time $\timeindex$. Based on the posterior and a cost function $\cost:\statespace\times\actionspace\to\R$, the \bayesianagent\ outputs  an action $\action \in \actionspace$ minimizing an expected cost with respect to the posterior probability,
    \begin{align}\label{eq:action}
        \action_\timeindex = \underset{\action \in \actionspace}{\argmin}{\sum_{\statevar}\cost(\statevar,
        \action)\probabilitymeasure(\statevar|\observation_\timeindex)}.
    \end{align}
    
\begin{algorithm}[h!]
    \begin{algorithmic}[1]
    \State  Agents aim to estimate state $\statevar$
    \For{$\timeindex \in 1,2,\dots$}
    \State Agent $\timeindex$ observes $\llmobservation_\timeindex\sim \probabilitymeasure(\llmobservation_\timeindex|\statevar)$ 
    \State Agent $\timeindex$ uses LLM to obtain $\observation_\timeindex\sim \probabilitymeasure(\observation_\timeindex|\llmobservation_\timeindex)$ 
    \State Agent $\timeindex$ computes posterior using~\eqref{eq:bayesrule} 
    \State Agent $\timeindex$ takes optimal action according to~\eqref{eq:action}
    \State Agents $\timeindex+1,\dots$  update public prior using~\eqref{eq:priorupdate}
    \EndFor{}
    \end{algorithmic}
    \caption{Social Learning Protocol for \bayesianagents}
    \label{alg:sociallearning}
\end{algorithm}

The \bayesianagent\ broadcasts its action to all successive \bayesianagents. The future \bayesianagents\ update their prior of the true state based on observing the action of the \bayesianagent\ (and not the private observation), 
    \begin{align}\label{eq:priorupdate}
        \prior_{\timeindex + 1} = \priorupdate(\prior_{\timeindex},\action_\timeindex).
    \end{align}
Here, the $\priorupdate$ is given by the following filtering equation, 
\begin{align}
 \priorupdate(\prior,\action) = \frac{\actionprob(\prior,\action) \prior}{\indicator_{\statedim}\actionprob(\prior,\action)  \prior} ,
\end{align}
where $\actionprob(\prior,\action) = \mathsf{diag}(\probabilitymeasure(\action|\statevar=1,\prior),\dots,\probabilitymeasure(\action|\statevar=\statedim,\prior))$ is the probability of actions for different states given the prior, each of which is given by, 
\begin{align}\label{eq:probaction}
    \begin{split}
\probabilitymeasure(\action|\statevar=\stateidx,\prior) = \sum_{\observation\in \observationspace} \probabilitymeasure(\action|\observation,\prior)\probabilitymeasure(\observation|\statevar=\stateidx,\prior)\\ \probabilitymeasure(\action|\observation,\prior) 
 = \begin{cases}
     1, & \text{if } \cost^\prime_\action \observationmatrix_\observation  \prior \leq \cost^\prime_{\actionalt} \observationmatrix_\observation  \prior, \actionalt \in \actionspace \\
     0, &\text{otherwise}
 \end{cases}\\
    \end{split}
\end{align}
where $\observationmatrix_\observation = \mathsf{diag}([\probabilitymeasure(\observation|\statevar=1),\dots, \probabilitymeasure(\observation|\statevar=\statedim)])^\prime$ and $\cost_\action = [\cost(1,\action),\dots,\cost(\statedim,\action)]^\prime$. The derivation of the social learning filter can be found in~\cite{krishnamurthy_partially_2016}.  The next section shows that ~\bayesianagents\ herd in their public belief in finite time with probability 1. 

\section{Main Result: Herding in Social Learning}\label{sec:herding}
This section proves that the \bayesianagents\ described in the previous section form an information cascade and herd in their actions when the public prior gets strong. 

We first define an information cascade occurring at time $\timeindex$ for the \bayesianagents\ in the following definition.

\begin{definition}\label{def:infocascade}(\textbf{Information Cascade}): 
    An information cascade occurs at time $\herdtime$ if the public belief of all agents after time $\herdtime$ are identical, i.e., $\prior_{\timeindex}(\statevar) = \prior_{\herdtime}(\statevar)$ $\forall \ \statevar\in\statespace$ for all time $\ \timeindex\geq\herdtime$. 
\end{definition}
Next, we define herding at time $\herdtime$ for \bayesianagents.
\begin{definition}\label{def:herding}(\textbf{Herding}):
    Herding in the \bayesianagents\ agents takes place at time $\herdtime$ if the action of all agents after $\herdtime$ are identical, i.e. $\action_{\timeindex} = \action_{\herdtime}$ for all time $\timeindex\geq\herdtime$. 
\end{definition}
It is straightforward to show that an information cascade (Def.~\ref{def:infocascade}) occurring at time $\timeindex$ implies that herding also takes place at time $\timeindex$ (Def.~\ref{def:herding}). We now state the main result on herding in \bayesianagents, which shows that the protocol of Algorithm~\ref{alg:sociallearning} leads to the agents herding in finite time. 
\begin{theorem}\label{th:herding}(\textbf{Herding in Bayesian social learning of \bayesianagents})
    The social learning protocol of the \bayesianagents\ described in Algorithm~\ref{alg:sociallearning} leads to an information cascade (Def.~\ref{def:infocascade}) and agents herd (Def.~\ref{def:herding}) in finite time $\herdtime<\infty$ with probability 1.
\end{theorem}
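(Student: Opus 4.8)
The plan is to reduce the theorem to a single statement about an absorbing set of beliefs, and then to establish that this set is reached in finite time. Define the \emph{herding set}
\[
\mathcal{H} = \Big\{\prior \in \probspace(\statedim) : \argmin_{\action \in \actionspace} \cost^\prime_\action \observationmatrix_\observation \prior \ \text{is the same action for every } \observation \in \observationspace \Big\},
\]
i.e.\ the beliefs at which the minimizer in \eqref{eq:action} does not depend on the private observation. On $\mathcal{H}$ let $a^\star$ denote this common action. Then by \eqref{eq:probaction} we have $\probabilitymeasure(a^\star|\observation,\prior)=1$ for all $\observation$, so $\probabilitymeasure(a^\star|\statevar=\stateidx,\prior)=\sum_{\observation}\probabilitymeasure(\observation|\statevar=\stateidx)=1$ for every $\stateidx$; hence $\actionprob(\prior,a^\star)=I$ and the filter \eqref{eq:priorupdate} collapses to $\priorupdate(\prior,a^\star)=\prior$. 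Thus once $\prior_\timeindex\in\mathcal{H}$ the public belief is frozen and every later agent repeats $a^\star$, which is precisely an information cascade (Def.~\ref{def:infocascade}) and herding (Def.~\ref{def:herding}). It therefore suffices to prove that $\tau:=\inf\{\timeindex:\prior_\timeindex\in\mathcal{H}\}<\infty$ almost surely and set $\herdtime=\tau$.

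Next I would record two structural facts. First, the public belief is a bounded martingale with respect to $\actionfiltration_\timeindex=\sigma(\action_1,\dots,\action_\timeindex)$, since $\prior_{\timeindex+1}(\stateidx)=\expectation[\indicator\{\statevar=\stateidx\}\mid\actionfiltration_\timeindex]$ is a Doob martingale; by the martingale convergence theorem it converges almost surely to some $\prior_\infty$. Second, $\mathcal{H}$ is relatively open and contains a neighborhood of each vertex $\indicatorstate_\stateidx$ of $\probspace(\statedim)$: at $\prior=\indicatorstate_\stateidx$ the posterior equals $\indicatorstate_\stateidx$ for every $\observation$, so the minimizing action is $\argmin_\action \cost(\stateidx,\action)$ independently of $\observation$, and this persists on a neighborhood. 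Consequently the learning region $\mathcal{H}^c$ is bounded away from the vertices.

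Finite-time entry I would then obtain from a uniform escape estimate. Because $\observationspace$ is finite and the likelihoods in $\observationmatrix$ are strictly positive, each observation occurs with probability at least some $\beta>0$, uniformly in $\statevar$. Feeding the belief a run of observations that maximally favor a single state drives $\prior$ toward the corresponding vertex, and since $\mathcal{H}$ is an open neighborhood of that vertex, finitely many such steps place $\prior$ inside $\mathcal{H}$. Quantifying this yields constants $\eta>0$ and $N<\infty$ with $\probabilitymeasure(\prior_{\timeindex+N}\in\mathcal{H}\mid\prior_\timeindex)\ge\eta$ whenever $\prior_\timeindex\in\mathcal{H}^c$, and a geometric-trials argument gives $\probabilitymeasure(\tau>mN)\le(1-\eta)^m\to0$, hence $\tau<\infty$ almost surely.

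The main obstacle is the uniformity in this last step: producing a single pair $(\eta,N)$ valid over the whole learning region. Near the boundary $\partial\mathcal{H}$ the number of favorable observations needed to cross into $\mathcal{H}$ could a priori blow up, and for $\statedim>2$ the natural log-likelihood-ratio coordinates are unbounded on $\mathcal{H}^c$ (a belief can sit near a lower-dimensional face while remaining far from every vertex), so the clean additive-increment argument of the two-state case does not transfer verbatim. I expect to handle this by noting that the beliefs reachable from the fixed prior $\prior_1$ form a countable set acted on by the finitely many maps $\priorupdate(\cdot,\action)$, reducing the claim to showing $\mathcal{H}$ is reached with positive probability from every transient belief, and then combining this with the almost-sure convergence of the martingale to exclude an infinite sojourn in $\mathcal{H}^c$.
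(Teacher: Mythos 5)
Your reduction to an absorbing set and the Doob-martingale convergence of the public belief match the paper's first steps (the paper's set $\Tilde{\observationspace}_\timeindex$ of observations at which the realized action is suboptimal is empty exactly when your $\prior\in\mathcal{H}$). But your finite-time argument goes a different way and has a genuine gap, only part of which you flag yourself. The uniform minorization $(\eta,N)$ is not just technically delicate; it rests on a mischaracterization of the dynamics. The public belief is updated through the coarse action likelihoods $\probabilitymeasure(\action|\statevar,\prior)$ in \eqref{eq:priorupdate}, not through the observation likelihoods, so ``feeding a run of observations that maximally favor one state'' does not drive $\prior_\timeindex$ toward a vertex: each step only reveals which cell of the partition of $\observationspace$ induced by the current action map the private observation fell into. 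Indeed, the content of the theorem is that cascades typically freeze the public belief at an \emph{interior} point, so the fact that $\mathcal{H}$ contains neighborhoods of the vertices is beside the point, and a vertex-approach argument cannot establish entry into $\mathcal{H}$.

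The idea you are missing is the one that makes the paper's proof short: whenever a cascade has \emph{not} occurred at time $\timeindex$, the set $\Tilde{\observationspace}_\timeindex$ is nonempty, and because $\observationspace$, $\actionspace$ and $\statespace$ are finite there is a single constant $\constant>0$ (a minimum over finitely many nonzero quantities) such that the jump of the log-belief-ratio satisfies $|\priorratio_{\timeindex+1}(\stateidx,\stateidxalt)-\priorratio_\timeindex(\stateidx,\stateidxalt)| = |\actionratio_\timeindex(\stateidx,\stateidxalt)|\geq\constant$ for some pair $\stateidx\neq\stateidxalt$. Almost-sure convergence of the martingale $\prior_\timeindex$ forces these increments to fall below $\constant$ eventually, which is an immediate contradiction; no reachability, uniform escape probability, or countable-reachable-set argument is needed. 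If you want to salvage your route, this lower bound $\constant$ is exactly the quantitative input that should replace your $(\eta,N)$: it shows directly that an infinite sojourn in $\mathcal{H}^c$ is incompatible with convergence, rather than showing that $\mathcal{H}$ is reached with positive probability.
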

\begin{proof}
    Proof in Appendix.
\end{proof}
Theorem~\ref{th:herding} shows that herding happens in finite time, and therefore, the agents take the same action regardless of their private observation. Discarding the private observation, which provides valuable information about the current state, makes their state estimation incorrect and inefficient.    

A stochastic control approach can be deployed for the \bayesianagents\ that detects herding and counteracts it by suitably controlling the actions based on the observations and the public prior. 
However, to derive structural results on the optimal policy of the controller, we formulate a stopping time problem that optimally decides when to herd based on a socialistic objective. 
\section{Stochastic Control Approach for Delayed Herding in \bayesianagents}\label{sec:stochasticcontrol}
The herding can be controlled dynamically by switching between the optimal action computed using~\eqref{eq:action} and revealing the private observation. However, we consider a stopping time problem formulation for quickest-time herding to derive structural results on the optimal decision rule. The stopping time formulation makes the \bayesianagents\ act benevolently to delay herding, make observations public and improve the state estimate. 
 
\subsection{Socialistic Objective for Stopping Time Problem}
We formulate a stopping time problem to switch between two modes of action optimally, one with complete privacy (acting according to~\eqref{eq:action}) or with no privacy (revealing private observation). The decision rule for the type of action (reveal private observation or optimally act) is denoted by $\decision$ and the stationary rule for computing the decision rule by $\policy$. 
To formulate the stopping time problem for quickest time herding in \bayesianagents, we first define the following natural filtration at time $\timeindex$ with respect to the actions and decision rules of the past agents,
\begin{align*}
    \filtration_\timeindex = \sigma(\action_1,\dots,\action_{\timeindex-1},\decision_1,\dots,\decision_\timeindex).
\end{align*}
The objective of the stopping time is to optimally achieve a tradeoff between delaying herding and announcing the state $\indicatorstate_1$ (by herding to it). We denote $\stoppingtime$ as the stopping time when the decision to stop is announced (after which the agents herd). 
Then we define the following social welfare cost, which each \bayesianagent\ optimizes with respect to the stationary policy $\policy$, 
\begin{align}
\begin{split}\label{eq:socialwelfarecost}
    &\avgcost_{\policy}(\prior) = \expectation_{\policy}\left\{  
    \sum_{\timeindex = 1}^{\stoppingtime-1} \discountfactor^{\timeindex-1} \expectation\left\{\cost(\statevar_\timeindex,\action_\timeindex)|\filtration_{\timeindex-1}\right\} \right.\\&\left.+ \sum_{\timeindex=1}^{\stoppingtime-1} \discountfactor^{\timeindex-1} \delaycost  \expectation_{\policy} \left\{\indicator(\statevar_\timeindex=\indicatorstate_1)|\filtration_{\timeindex-1}
    \right\} +\right.\\&\left. 
    \discountfactor^{\stoppingtime-1} \errorcost\expectation_\policy\left\{ \indicator(\statevar_{\stoppingtime}\neq \indicatorstate_1 )\right\} + \frac{\discountfactor^{\stoppingtime-1}}{1-\discountfactor} \min_{\action \in \actionspace} \expectation\left\{ \cost(\statevar_\stoppingtime,\action)|\filtration_{\stoppingtime-1}\right\}
    \right\},
\end{split}
\end{align}
here $\delaycost$ and $\errorcost$ are cost parameters and $\discountfactor$ is the discount factor. 

\textit{Justification for cost: } The first term in the cost represents the discounted cost incurred before the decision to stop is announced. The second and third terms correspond to the delay in announcing the decision to stop and the error in announcing the stopping. The final term corresponds to the infinite horizon cost incurred once the stopping has been announced.  Hence, the cost of~\eqref{eq:socialwelfarecost} is a socialistic cost that accounts for a discounted cost that'll be incurred instead of the opportunistic cost-based action of~\eqref{eq:action}.

The social welfare cost of~\eqref{eq:socialwelfarecost} is minimized by using a decision rule which decides the action the \bayesianagent\ takes, 
\begin{align}\label{eq:constraineddecisionrule}
\action_\timeindex(\prior_{\timeindex-1},\observation_\timeindex,\policy) = \begin{cases}
        \observation_\timeindex, & \text{if } \policy(\prior_{\timeindex-1}) = 2\\
        \argmin_\action \cost_\action \prior_{\timeindex-1}& \text{if } \policy(\prior_{\timeindex-1}) = 1 
    \end{cases},
\end{align}
where the policy $\policy:\probspace(\statedim)\to \decisionspace$ maps the belief over the state space to the decision space $\decisionspace = \{1, 2\}$, and the decision $1$ corresponds to continuing whereas $2$ corresponds to stopping. In the following subsection, we show that the optimal policy for the social welfare cost of~\eqref{eq:socialwelfarecost}, has a switching threshold curve with respect to the belief space. 
\subsection{Structural Results}
To prove our structural result, we make the following assumptions on our cost structure, 
\begin{enumerate}[start=1,label={(\bfseries S\arabic*):}]
    \item $\cost(\indicatorstate_\stateidx,\action) - \cost(\indicatorstate_{\stateidx+1},\action) \geq 0$ $\forall \stateidx = 1,\dots,\statedim-1$ \ $\forall \action$.
    \item $\cost(\indicatorstate_\statedim,\action) - \cost(\indicatorstate_\stateidx,\action) \geq (1-\discountfactor)\sum_{\observation}(\cost(\indicatorstate_\statedim,\action)\observationmatrix_{\statedim,\observation} - \cost(\indicatorstate_\stateidx,\action)\observationmatrix_{\stateidx,\observation})$ $\forall \stateidx = 1,\dots,\statedim$.
    \item $(1-\discountfactor)\sum_{\observation}(\cost(\indicatorstate_1\action)\observationmatrix_{1,\observation} - \cost(\indicatorstate_\stateidx,\action)\observationmatrix_{\stateidx,\observation}) \geq \cost(\indicatorstate_1,\action) - \cost(\indicatorstate_\stateidx,\action)$ $\forall \stateidx = 1,\dots,\statedim$.
    \item $\observationmatrix$ is totally positive of order 2 (TP2)~\footnote{A matrix $A$ is TP2 if all minors of the matrix $A$ are positive.}.
\end{enumerate}
\textit{Justification for Assumptions: } S1 is sufficient for the cost terms to be monotone likelihood ratio decreasing (defined in Appendix~\ref{sec:expapp}) in $\prior$~\cite{krishnamurthy_partially_2016}. This condition can be ensured by taking a cost that puts the state $1$ as the most costly. S2 and S3 are sufficient for the costs to be submodular, and they imply that the difference between the cost of continuing and the cost of stopping decreases in the state space. This gives incentives to herd when the belief is strong enough towards $\indicatorstate_1$. Intuitively, the decision to herd should be made when making the observations public, which does not improve the state estimate. S4 is a standard structural assumption in analyzing structural results for optimal policies of stopping time partially observed Markov decision processes.
\begin{theorem}\label{th:threshold}
    Consider the sequential decision problem of \bayesianagents\ for detecting state $\indicatorstate_1$ with the social welfare cost of \eqref{eq:socialwelfarecost} and the constrained decision rule of \eqref{eq:constraineddecisionrule}. Then, under Assumption (S1-4), the constrained decision rule of \eqref{eq:constraineddecisionrule} is a threshold in the belief space with respect to a threshold switching curve that partitions the belief space $\probspace(\statedim)$. The optimal policy can be given by, 
    \begin{align}\label{eq:optpolicy}
        \policy^{*}(\prior) = \begin{cases}
           2 \ \text{(continue) if} \ \prior \in \region_2 \\
           1 \ \text{(stop) if} \ \prior \in \region_1 \\
        \end{cases},
    \end{align}
    where $\region_1$ and $\region_2$ are individual connected regions of $\probspace(\statedim)$.
\end{theorem}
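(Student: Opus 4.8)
The plan is to cast the stopping-time problem of \eqref{eq:socialwelfarecost} as an infinite-horizon discounted partially observed Markov decision process on the belief simplex $\probspace(\statedim)$ and to derive the threshold structure through monotonicity of the value function with respect to the monotone likelihood ratio (MLR) stochastic order. First I would write Bellman's equation for the value function $V$, in which the continue decision ($\decision=2$, reveal the private observation) incurs the running cost, the delay cost $\delaycost\,\prior(1)$, and the discounted cost-to-go $\discountfactor\,\expectation\{V(\priorupdate(\prior,\action))\}$ after the filtering update, whereas the stop decision ($\decision=1$, herd to $\indicatorstate_1$) incurs the terminal error cost $\errorcost\,(1-\prior(1))$ plus the discounted infinite-horizon herding cost $\frac{1}{1-\discountfactor}\min_{\action}\cost_\action^\prime\prior$. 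Writing $Q(\prior,1)$ and $Q(\prior,2)$ for the two branches, the optimal policy stops exactly on $\region_1=\{\prior:Q(\prior,1)\leq Q(\prior,2)\}$, so proving \eqref{eq:optpolicy} reduces to showing that $\region_1$ is a connected MLR-upper set bordered by a switching curve.

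Second, I would reduce the claim to two monotonicity statements: (i) $V$ is MLR-monotone, and (ii) the advantage $\Delta(\prior):=Q(\prior,2)-Q(\prior,1)$ is MLR-monotone, so that $\region_1=\{\prior:\Delta(\prior)\geq 0\}$ is an MLR-connected set. The engine for both is that the Bayesian filtering update preserves the MLR order whenever the observation matrix $\observationmatrix$ is TP2, which is exactly Assumption (S4); this is the standard filtering result of \cite{krishnamurthy_partially_2016}. Assumption (S1) makes each cost vector $\cost_\action$ MLR-monotone in $\prior$, and since the Bellman operator is assembled from these costs, the order-preserving filter, and the order-preserving operations $\min$ and $\expectation$, value iteration propagates MLR-monotonicity to $V$. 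Assumptions (S2)--(S3) supply the submodularity of $Q(\prior,\decision)$ in $(\prior,\decision)$ required for (ii); they encode that as the belief concentrates toward $\indicatorstate_1$ the incremental benefit of continuing (revealing observations) falls below the benefit of stopping and herding.

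Third, because the MLR order is only a partial order on $\probspace(\statedim)$ when $\statedim>2$, I would localize the argument to the family of lines $\lines$ in the simplex joining the vertex $\indicatorstate_1$ to each point $\bar\prior$ on the opposite face, on each of which the MLR order is a total order. Monotonicity of $\Delta$ then forces $\region_1\cap\linesymb$ to be a connected sub-segment containing the $\indicatorstate_1$-endpoint for every $\linesymb\in\lines$, which yields a single switching point $\threshold(\bar\prior)$ per line. Sweeping $\bar\prior$ over the face traces out the switching curve partitioning $\probspace(\statedim)$, and MLR-monotonicity across neighboring lines shows that $\region_1$ and $\region_2$ are each connected, establishing \eqref{eq:optpolicy}.

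The hard part will be the joint induction in the second step: one must verify that Assumptions (S2)--(S3) yield submodularity of the value-level quantities $Q(\prior,\decision)$ and not merely of the one-stage costs, i.e. that submodularity is preserved by the Bellman recursion simultaneously with MLR-monotonicity. The delicacy is that the delay term $\delaycost\,\prior(1)$ and the terminal error term $\errorcost\,(1-\prior(1))$ push in opposite directions along the order, so the $(1-\discountfactor)$-weighted balance inequalities (S2)--(S3) between one-stage and look-ahead costs are precisely what is needed to close the induction. Once monotonicity and submodularity are jointly shown to be preserved under value iteration, the line-wise threshold conclusion follows as above.
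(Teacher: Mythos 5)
Your proposal is correct and follows essentially the same route as the paper: the paper decomposes the social welfare cost into the two branch costs $\bigcost(\prior,1)$ and $\bigcost(\prior,2)$, verifies conditions (C), (F1), (F2) and (S) of Theorem 12.3.4 of \cite{krishnamurthy_partially_2016} using (S1) for FOSD/MLR monotonicity of the costs, (S2)--(S3) for submodularity on the lines $\lines(\indicatorstate_\stateidx,\bar{\prior})$, and (S4) for TP2 of $\observationmatrix$, and then invokes that theorem to obtain the switching curve. The joint value-iteration induction you flag as the hard part is precisely the content of the cited theorem, which the paper does not re-derive but delegates to the reference.
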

\begin{proof}
    Proof in Appendix. 
\end{proof}
The optimal policy being threshold along a switching curve makes it possible to compute the optimal policy efficiently. A linear threshold function can approximate the switching curve. The optimal linear approximation can be searched by using the approximate estimates of the expected cost of~\eqref{eq:socialwelfarecost}, e.g., by using a simultaneous perturbation stochastic approximation algorithm which adaptively updates the parameters of the linear threshold to minimize the cost.

\section{Key Application and Numerical Results: Flagging Hate Speech Peddlers on Online Platforms}\label{sec:application}
This section discusses the key proposed application of our framework in flagging hate speech peddlers on a social network. We motivate the application and describe the different components of our framework. A numerical experiment is shown on a real-life hate speech dataset to show that a sequence of \bayesianagents\ forms an information cascade. Finally, we numerically show how a threshold policy delays herding in the stopping time problem formulation of Section~\ref{sec:stochasticcontrol}. 

\subsection{\bayesianagents\ detecting hate speech peddlers on online platforms: Motivation and Experimental Setup}
Flagging users who spread toxic content online is a significant challenge. The state $\statevar_\timeindex$ could represent the category of peddlers classified based on the intensity and type of content they are propagating. 

For example, the state could be 3-dimensional, indicating the hate-intent of the user (hateful or not), the hate speech intensity scale~\cite{bahador_classifying_nodate}, and the particular group the hate speech is directed towards. The noisy observations are the text comments from the user that inform about the state and are from a high-dimensional observation space.  The \bayesianagent\ is an online detection mechanism equipped with a large language model (LLM) sensor that analyses comments and flags users sequentially. The LLM is used to parse the text and obtain a list of appropriate features from the finite-dimensional feature space $\observationspace$, which the platform could design. The features contain information about the text and comparisons of the text with a given context. In our setup, the Bayesian engine of the \bayesianagent\ consists of a likelihood parameterized by a neural network. For a discrete distribution, the likelihood neural network could use a restricted Boltzmann machine (RBM) to generate samples from the likelihood. The posterior can be updated using~\eqref{eq:bayesrule} using the likelihood and the closed form prior of~\eqref{eq:priorupdate} from the previous step. The \bayesianagent\ takes an action according to~\eqref{eq:action}. The action is classifying whether the user has hateful intent, and the cost accounts for the misclassification of the state. 

In a social network, there may be multiple such \bayesianagent\ deployed to flag malicious users and decrease propagation of hate speech. The flags by the previous \bayesianagent\ are visible, but the private observations are not due to computation and privacy restrictions (so that the LLM can not be fine-trained on the text or the feature mappings). Since the observations are generated sequentially, the \bayesianagent\ may use the same LLM but with a different context and for a different text observation. Therefore, a single \bayesianagent\ can be viewed as a sequence of \bayesianagents\ learning from their private observations and the past actions of previous \bayesianagents.

\subsection{Numerical Demonstration of Herding}\label{sec:numerical}
\begin{figure}
    \centering
\includegraphics[width=\columnwidth]{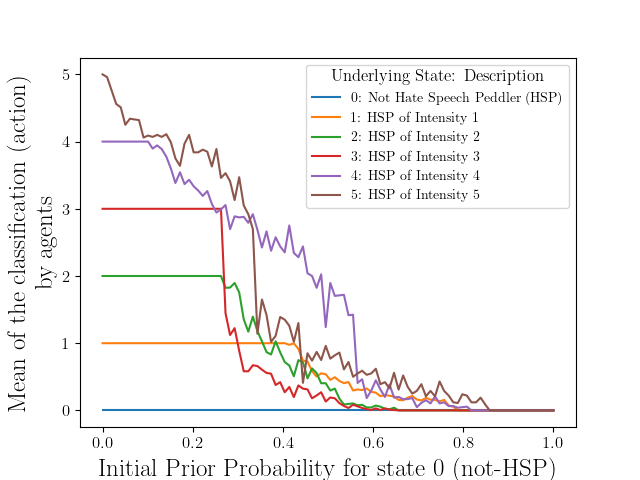}
    \caption{Herding in \bayesianagents. For different underlying states, the average action of the agents is $0$ for a strong initial prior probability. Hence if the public prior is strong, the \bayesianagents\ misclassify the user as a non hate speech peddler even in presence of private evidence otherwise. }
    \label{fig:herding}
    \vspace{-5mm}
\end{figure}
We now numerically demonstrate the existence of information cascades when \bayesianagents\ sequentially perform state estimation for detecting hate speech peddlers. 

The state space models the state of the user, $\statespace = \{0= \text{non-hateful}, 1 = \text{hateful}\} \times \{1,2,3,4,5\}$, where the first dimension corresponds to whether a user is a hate speech peddler (HSP) or not, and the second dimension to the intensity of the toxicity of the speech (evaluated by crowdsourcing~\cite{sachdeva_measuring_2022}). For the numerical result, we consider an augmented state space with $6$-states, $\statespace=\{0=\text{non-hateful},1,2,3,4,5\}$, where the last $5$ entries denote an HSP of different intensities as described later. The high-dimensional observations of the state are in the form of text comments posted on online platforms. An LLM is used to parse the text observations by prompting the LLM with the text and a system prompt to return an output belonging to an observation space $\observationspace$, which contains the following binary variables: a) targetted towards someone and b) contains explicit words c) indicate violence d) has bias e) is dehumanizing f) is genocidal. We augment the output of the LLM to an observation space of cardinality $|\observationspace| = 6$. The details of this augmentation, along with additional experimental details, are in Appendix~\ref{sec:expapp}.  We use Mixtral 8x7B v0.1, an open-source mixture of experts LLM with 7 billion parameters~\cite{jiang_mixtral_2024}. We query the LLM using the TogetherAI API.  The reproducible code and the dataset link can be found on github.com/aditj/sociallearningllm. 

The action space $\actionspace$ is considered the same as the state space and the cost function which accounts for the misclassification of an HSP is given by,
\begin{align*}
    \cost(\statevar,\action) &= \indicator(\statevar\neq 0)[\indicator(\action=0)+ |\statevar-\action|],
\end{align*}
where square brackets are the components of the vector. The first time accounts for the misclassification of a hateful user and the second term accounts for the difference in intensity.  
We use the measuring hate speech dataset from~\cite{sachdeva_measuring_2022}, which contains $40000$ annotated comments from Twitter, Reddit, and Gab. The annotations are performed by crowdsourcing and indicate if the comments contain hate speech and the intensity of the toxicity exhibited on a scale of 1 to 5, measured using a Rasch measurement scale whose details can be found in~\cite{sachdeva_measuring_2022}. 

Since the data is anonymized, we consider a synthetic user construction. In a span of $\timehorizon$ textual comments, a hate speech peddler (HSP) contains hate speech text from one of the intensity levels. Hence there are $6$ types of users: non-HSP and HSP with intensity from 1 to 5, each with $\timehorizon=100$ comments of the corresponding intensity. The experiment results are averaged over $N_{MC}=100$ independent runs. The likelihood is computed by training a restricted Boltzmann machine, a subset of the training dataset with $1000$ samples, the details of which are in the Appendix. 

For both experiments, we benchmark our results against the initial public belief (specifically prior probability of state 0). Due to Theorem~\ref{th:herding}, the initial public belief is sufficient to identify intermediate public belief regions where the information cascades are observed.

In Figure~\ref{fig:herding}, we report the average action for the agents for different initial prior beliefs and initial states. We only vary the initial prior belief for state $0$ and assume the rest of the initial prior over the actual underlying state. In regions of initial solid strong prior, the average action is $0$ regardless of the true state. The difference in the public belief region for information cascades depends on the observation probabilities.

\subsection{Threshold policy for delayed herding in stopping time problem}
\begin{figure}[th!]
\vspace{-5mm}
    \centering
\includegraphics[width=\columnwidth]{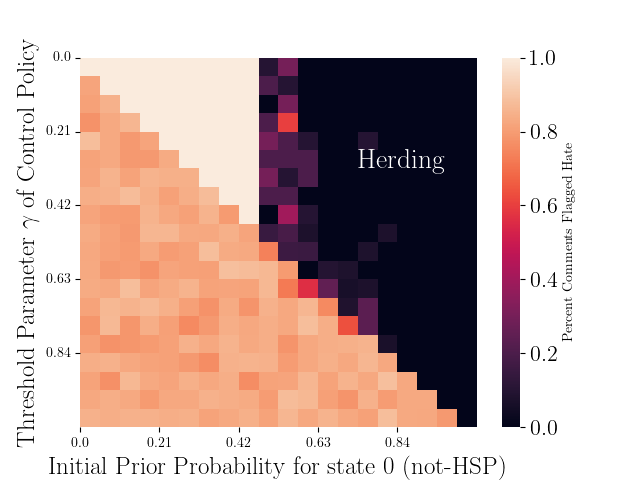}
    \caption{Effect of different thresholds in the policy of form~\eqref{eq:thresholdpolicy}. For the state $\statevar=1$, it can be seen that increasing the policy threshold delays herding in regions of strong priors (left and right). The optimal threshold, which can be searched using stochastic approximation, is a threshold that optimally delays herding to balance the delay, error, and social learning costs of~\eqref{eq:socialwelfarecost}.  }
    \label{fig:thresholding}
\end{figure}  
\vspace{-2mm}
Since the optimal policy of~\eqref{eq:optpolicy} has a threshold structure, we demonstrate the effect of the policies with different threshold levels in delaying herding in the stopping time problem of Sec.~\ref{sec:stochasticcontrol}. The optimal threshold policy can be searched using a stochastic approximation algorithm, which will be discussed in a later work. We consider a simplified state space $\statespace = \{0 = \text{not hateful},1=\text{hateful}\}$ and consider the task of estimating the state $\indicatorstate_1 = 1 =$ hateful. Correspondingly, we consider a reduced action space $\actionspace = \{0 = \text{not hateful},1=\text{hateful}\}$ and a reduced observation space which gives a binary observation as to whether the text is toxic or not. We consider the cost $\cost(\statevar,\action) = \indicator(\statevar=1)\indicator(\action=0)$, maximizing the type-1 error for $1$ state. We assume that the observation matrix is such that an HSP gives a text observation that is toxic with probability $70\%$, whereas a non-HSP does not give a toxic text observation. The other parameters remain the same from~\ref{sec:numerical}. For the simplified system model, policies that have a switching threshold curve with respect to the belief space can be represented as, 

 \begin{align}\label{eq:thresholdpolicy}
        \policy(\prior) = \begin{cases}
           2 \ \text{(continue) if} \ \prior(0) \leq \threshold \\
           1 \ \text{(stop) if} \ \prior(0) > \threshold  \\
        \end{cases},
    \end{align}
where $\threshold$ is the threshold parameter. 

We run the experiment when the underlying state $\statevar=1$ and measure the percentage of actions when the user is not flagged as an HSP. 
In Figure~\ref{fig:thresholding}, we demonstrate the effect of having different thresholds for different initial public beliefs in delaying herding even when the initial public belief is strong for the state $0$ (not an HSP). Ideally, the percentage should be ~$70\%$ region of Figure~\ref{fig:thresholding} (lower-triangular region). We observe that in regions of strong public belief for state $0$ (right side), a decrease in the threshold (more private observations revealed) delays herding and takes actions closer to the true state (lower right corner). The optimal threshold will optimally delay in herding to minimize the cost of~\eqref{eq:socialwelfarecost}.

\section{Conclusion}\label{sec:conclusion}
This paper considered the social learning problem in \bayesianagenttext\ using large language models as a sensor to parse textual observations from the environment. It was shown that a sequence of \bayesianagents\ herd to a common action when the private observations are hidden. A stopping time problem was formulated to optimally tradeoff between privacy and herding, the optimal policy with a threshold structure. We numerically illustrated the herding and effect of a threshold policy on a real-life hate speech dataset where the \bayesianagents\ aim to identify hate speech peddlers in an online platform. 
The single \bayesianagent\ can be of independent research interest to synthesize LLM aided systems making robust autonomous decisions.
Future work can look at the problem of optimally incentivizing a sequence of LLM agents in a stopping-time formulation to delay herding and estimate the underlying state optimally. 
\bibliographystyle{abbrv}
\bibliography{references}

\appendix\label{sec:appendix}
    \subsection{Proof for Theorem~\ref{th:herding}}
    \begin{proof}
    Define $\priorratio_\timeindex(\stateidx,\stateidxalt) = \log(\prior(\stateidx)/\prior(\stateidxalt)), \stateidx,\stateidxalt \in \statespace$. From~\eqref{eq:priorupdate} we have, $\priorratio_{\timeindex+1}(\stateidx,\stateidxalt) = \priorratio_{\timeindex}(\stateidx,\stateidxalt) + \actionratio_{\timeindex}(\stateidx,\stateidxalt)$ where $\actionratio_{\timeindex}(\stateidx,\stateidxalt) = \log(\probabilitymeasure(\action_\timeindex|\statevar=\stateidx,\prior_\timeindex)/\probabilitymeasure(\action_\timeindex|\statevar=\stateidxalt,\prior_\timeindex)$. 

    The probability of the actions given the state and prior can be written as, 
    \begin{align*}
        \probabilitymeasure(\action|\statevar,\prior) = \sum_{\observation\in \observationspace} \prod_{\actionalt \in \actionspace \setminus \action} \indicator(\cost^\prime_\action \observationmatrix_\observation  \prior \leq \cost^\prime_{\actionalt} \observationmatrix_\observation  \prior) \observationmatrix_{\observation,\statevar}.
    \end{align*}
    Let $\Tilde{\observationspace}_\timeindex \subseteq \observationspace$ be a subset of the observation space for which the action $\action_\timeindex$ is suboptimal with respect to all other actions, i.e., $\prod_{\actionalt \in \actionspace \setminus \action_\timeindex}  \indicator(\cost^\prime_{\action_\timeindex} \observationmatrix_\observation  \prior > \cost^\prime_{\actionalt} \observationmatrix_\observation  \prior) = 1 \ \forall \observation \in \observationspace$. When information cascade (Def.~\ref{def:infocascade}) occurs this set should be empty since no matter what the observation, the action $\timeindex$ should be optimal according to~\eqref{eq:action}.
    Also, rewriting $\actionratio_{\timeindex}(\stateidx,\stateidxalt)$,
    \begin{align*}
        \actionratio_{\timeindex}(\stateidx,\stateidxalt) = \log \left(\frac{1-\sum_{\observation\in\Tilde{\observationspace}_\timeindex}\observationmatrix_{\observation,\stateidx}}{1-\sum_{\observation\in\Tilde{\observationspace_\timeindex}}\observationmatrix_{\observation,\stateidxalt}}\right).
    \end{align*}
    Therefore when an information cascade occurs, $\actionratio_\timeindex(\stateidx,\stateidxalt) = 0, \ \forall \stateidx,\stateidxalt\in\statespace$ (Due to $\Tilde{\observationspace_\timeindex}$ being an empty set). Also if $\Tilde{\observationspace_\timeindex}$ is nonempty, then $\actionratio_\timeindex(\stateidx,\stateidxalt)>\constant$, where $\constant$ is a positive constant.   

    Let $\actionfiltration_\timeindex = \sigma(\action_1,\action_2,\dots,\action_\timeindex)$ denote the natural filtration, where $\sigma$ is the operator which generates the corresponding sigma field.

    $\prior_\timeindex(\stateidx) = \probabilitymeasure(\statevar=\stateidx|\action_1,\dots,\action_\timeindex) = \expectation[\indicator(\statevar = \stateidx)\vert\actionfiltration_\timeindex]$ is a martingale adapted to $\actionfiltration_\timeindex$ for all $\stateidx \in \statespace$. This follows by the application of smoothing property of conditional expectation, $\expectation[\prior_{\timeindex+1}(\stateidx)\mid \actionfiltration_\timeindex] = \expectation\left[\expectation[\indicator(\statevar = \stateidx)\mid\actionfiltration_{\timeindex+1}]\right] = \expectation[\indicator(\statevar=\stateidx)\mid\actionfiltration_\timeindex]$.

    Therefore by the  martingale convergence theorem, there exists a random variable $\prior_\infty$ such that, $
    \prior_\timeindex \to \prior_\infty \ \text{w.p.} 1$. Therefore $\priorratio_\timeindex(\stateidx,\stateidxalt) \to \priorratio_\infty(\stateidx,\stateidxalt)$ w.p. 1., which implies there exists $\Tilde{\timeindex}$ such that $\forall \timeindex \geq \Tilde{\timeindex}$, $\vert \priorratio_\infty(\stateidx,\stateidxalt) - \priorratio_\timeindex(\stateidx,\stateidxalt) \vert \leq \constant/3$ and so,
    \begin{align}\label{eq:contra1}
        \vert \priorratio_{\timeindex+1}(\stateidx,\stateidxalt) - \priorratio_\timeindex(\stateidx,\stateidxalt) \vert \leq 2\constant/3, \forall \timeindex \geq \Tilde{\timeindex}.
    \end{align}
    We now prove the theorem by contradiction. Suppose a cascade does not occur, then for at least one pair $\stateidx \neq \stateidxalt, \stateidx,\stateidxalt \in \statespace$, $\probabilitymeasure(\action|\statevar=\stateidx,\prior)$ is different than $\probabilitymeasure(\action|\statevar=\stateidxalt,\prior)$. This would imply that the set $\Tilde{\observationspace}_\timeindex$ is nonempty and therefore, 
    \begin{align}\label{eq:contra2}
        |\priorratio_{\timeindex}(\stateidx,\stateidxalt)| = \vert \priorratio_{\timeindex+1}(\stateidx,\stateidxalt) - \priorratio_\timeindex(\stateidx,\stateidxalt) \vert \geq \constant.
    \end{align}
    ~\eqref{eq:contra1} and \eqref{eq:contra2} contradict each other. Therefore $\probabilitymeasure(\action|\statevar=\stateidx,\prior)$ is same for all $\stateidx\in\statespace$ and hence according to~\eqref{eq:priorupdate} information cascade occurs at time $\Tilde{\timeindex}$.
    \end{proof}
    \subsection{Proof for Theorem~\ref{th:threshold}}
    \begin{proof}
    We prove the Theorem by showing that it satisfies the conditions of Theorem 12.3.4 of~\cite{krishnamurthy_partially_2016}. A more general proof can be found in~\cite{krishnamurthy_partially_2016,krishnamurthy_bayesian_2011-1}.

In order to state and verify the assumptions of Theorem 12.3.4 of~\cite{krishnamurthy_partially_2016}, we need to define first order stochastic dominance (FOSD) and a submodular function. 
    We first define a Monotone Likelihood Ratio (MLR) ordering on a line and then define a submodular function with respect to this MLR ordering. We only need to consider the following lines, 
    \begin{align*}
\lines(\indicatorstate_\stateidx,\bar{\prior}) = \{\prior \in \probspace(\statedim)  : \prior = (1-\epsilon)\bar{\prior} + \epsilon \indicatorstate_\stateidx, 0 \leq \epsilon\leq 1\}\\, \bar{\prior} \in \mathcal{H}_\stateidx,
    \end{align*}
    where the state index is only between the extreme states, $\stateidx \in \{1,\statedim \}$ and,
     \begin{align*}
        \mathcal{H}_\stateidx = \{\prior \in \probspace(\statedim):\policy(\stateidx) = 0\}.
    \end{align*}
    To define the MLR ordering on a line we first define the MLR ratio with respect to belief space, 
    \begin{definition}
       \textbf{(Monotone Likelihood Ratio (MLR) Order)} Let $\prior_1$,$\prior_2 \in \probspace(\statedim)$, then $\prior_1$ dominates  $\prior_2$ with respect to the MLR order ($\prior_1 \geq_r \prior_2$) if,
        \begin{align*}
\prior_1(\stateidx)\prior_2(\stateidxalt) \leq \prior_1(\stateidxalt)\prior_2(\stateidx), \ \ \stateidx < \stateidxalt, \stateidx,\stateidxalt \in \{1,\dots,\statedim\}.
        \end{align*}
    \end{definition}
    The following definition is for the MLR ordering the lines $\lines_{\indicatorstate_\stateidx}, \ \stateidx \in \{1,\statedim \}$ 
    \begin{definition}
        (\textbf{MLR Ordering on Line  $\geq_{\linesymb_\stateidx}$}) $\policy_1$ is greater than $\policy_2$ with respect to the MLR ordering on the line $\lines(\indicatorstate_\stateidx,\prior), \stateidx\in\{1,\statedim\}$ ($\prior_1 \geq_{\linesymb_\stateidx}\prior_2$), if $\prior_1,\prior_2 \in \lines(\indicatorstate_\stateidx,\bar{\prior})$ for some $\bar{\prior}$ and $\prior_1 \geq_r \prior_2$. 
    \end{definition}
    Finally we are ready to define a submodular function on a line,
    \begin{definition}
        \textbf{Submodular Function on Line} For $\stateidx\in\{1,\statedim\}$, a function
    $\phi: \lines(\indicatorstate_\stateidx,\bar{\prior})\times \decisionspace \to \R$ is submodular if $\phi(\prior,\decision) - \phi(\prior,\bar{\decision}) \leq \phi(\bar{\prior},\decision) - \phi(\bar{\prior},\bar{\decision})$, for $\bar{\decision}\leq \decision, \bar{\prior} \leq_{\linesymb_\stateidx} \prior$.
    \end{definition}

  The following is used extensively to compare two beliefs and is a weaker condition than MLR ordering, 
    \begin{definition}
       (\textbf{First Order Stochastic Dominance (FOSD)}) Let $\prior_1$,$\prior_2 \in \probspace(\statedim)$, then $\prior_1$ first order stochastically dominates  $\prior_2$ ($\prior_1 \geq_s \prior_2$) if $\sum_{\stateidx=\stateidxalt}^{\statedim} \prior_1(\stateidx) \geq \sum_{\stateidx=\stateidxalt}^{\statedim} \prior_2(\stateidx) \ \forall \ \stateidxalt\in\statespace$. 
    \end{definition}

The stopping time problem with the socialistic cost of~\eqref{eq:socialwelfarecost} can be decomposed into two cost terms, each corresponding to the cost terms of the stopping time problem for partially observed Markov decision processes of Theorem 12.3.4 of~\cite{krishnamurthy_partially_2016}. 
\begin{align}\label{eq:bigcost}
\begin{split}
    \bigcost(\prior,1) &= \frac{1}{1-\discountfactor} \min_{\action} \cost_\action^{\prime} \prior\\
    \bigcost(\prior,2) &= \sum_{\observation\in\observationspace} \cost_{\observation}^\prime \observationmatrix_\observation \prior + (\delaycost + (1-\discountfactor)\errorcost)\indicatorstate_1\prior - (1-\discountfactor)\errorcost
\end{split}
\end{align}
$\bigcost(\prior,1)$ is the expected cost after the decision to herd has been made. Similarly, the first term in $\bigcost(\prior,2)$ is the expected cost when revealing private observations. The rest of the terms come from transforming the value function by the delay penalty costs~\cite{krishnamurthy_bayesian_2011-1}. 

   We now state the main assumptions of Theorem~12.3.4 of~\cite{krishnamurthy_partially_2016}, which are required for this to hold for the structural result of~\eqref{eq:optpolicy},
    \begin{enumerate}
        \item \textbf{(C)} $\prior_1 \geq_s \prior_2$ implies $\bigcost(\prior_1,\action) \leq \bigcost(\prior_2,\action)$.
        \item \textbf{(F1)} $\observationmatrix$ is totally positive of order 2 (TP2).
        \item \textbf{(F2)} $\markovkernel$ is totally positive of order 2 (TP2)
        \item (\textbf{S}) $\bigcost(\prior,\decision)$ is submodular on $[\lines(\indicatorstate_\statedim,\bar{\prior}),\geq_{\linesymb_\statedim}]$ and  $[\lines(\indicatorstate_1,\bar{\prior}),\geq_{\linesymb_1}]$.
    \end{enumerate}
    F1 follows from S4, and F2 follows from the fact that we only consider an identity transition matrix $\markovkernel$. And since the costs are linear, (C) follows by applying the definition of FOSD on equation~\eqref{eq:bigcost} and using assumption (S1). (S) follows from the definition of MLR ordering on the line, using the fact that $\observationmatrix$ is TP2 in the first term of $\bigcost(\prior,2)$ and using (S2) and (S3). Hence, the assumptions are verified, and the structural result is proved.
    \end{proof}
    \subsection{Brief Experimental Details}\label{sec:expapp}
\subsubsection{Hyperparameters Of LLM at Inference Time}
As described in the main text, we use a Mixtral-8x7B-v0.1. We consider the maximum response tokens as 100, a temperature of 0.7, a top-p of 0.7, a top-k of 50, and a repetition penalty as 50. We use the TogetherAI API to send the prompt to the LLM and receive the response. 
    \subsubsection{System Prompt}
    We consider the following system prompt,
    
    \texttt{[INST]\\  Return a JSON with the following format for the given text:\\ \{`is\_insulting': Bool,\\`is\_dehumanizing':Bool,\\`is\_humiliating':Bool,\\`promotes\_violence':Bool,\\`promotes\_genocide':Bool,\\`is\_respectful':Bool\}\\ 
    Text: \{\textit{comment}\}[/INST]},
    where ``\textit{comment}'' contains the comment observation which needs to be analyzed. 
\subsubsection{Response}
In the $240$ comment observations we parsed, the response of the LLM includes the JSON response at the start and an explanation for the corresponding mapping. We truncated the output to include the JSON and got a discrete low-dimensional observation from the textual comment. 
  \subsubsection{Reducing the observation space} 
Although the LLM output for text observation is of cardinality $2^6 = 64$, we reduce by considering the following order: respectful < insulting < dehumanizing < humiliating < violence < genocide~\cite{sachdeva_measuring_2022}. The binary map $\psi(z):\{0,1\}^6 \to 6 = \max\{ \stateidxalt:  \text{s.t.} \ z[\stateidxalt] = 1\} $ takes the observation as the highest severity present in the binary observation. 
\subsubsection{Likelihood Neural Network (Restricted Boltzmann Machine) } 
We use a subset of the labeled dataset (size = $240$ comments) to obtain  
We use $|\statedim| = 6$ restricted Boltzmann Machines to approximate the likelihood function $\probabilitymeasure(\observation|\statevar)$; we train each machine on observations obtained from different states (as defined in the main text). Each RBM has $6$ visible units and $4$ hidden units. We train the RBM using contrastive divergence for $100$ epochs and generate $1000$ samples using Gibbs sampling with $1000$ iterations. We obtain the approximate probabilities by empirically counting the samples. 
\end{document}